\documentclass[11pt,leqno]{article}   

\usepackage[T1]{fontenc}
\usepackage[utf8]{inputenc}
\usepackage{authblk}
\usepackage[all]{xy}

\usepackage{amsmath,amstext,amsthm,amsfonts,amssymb,color,latexsym}
\textheight=8.9in
\textwidth=6.2in
\voffset=-.68in
\hoffset=-.6in
  \parskip=6pt plus 2pt minus 2pt
\numberwithin{equation}{section}
\def\h{\mathfrak H}
\def\C{\mathbb{ C}}

\def\R{\mathbb{ R}}

\usepackage{epsfig}
\newtheorem{thm}{Theorem}[section]

\newcommand{\beano}{\begin{eqnarray*}}
\newcommand{\enano}{\end{eqnarray*}}
\newcommand{\ena}{\end{eqnarray}}

\newtheorem{cor}[thm]{Corollary}

\newcommand{\be}{\begin{equation}}
\newcommand{\ee}{\end{equation}}
\newcommand{\en}{\end{equation}}

\newcommand{\ba}{\begin{array}}
\newcommand{\ea}{\end{array}}

\renewcommand{\o}{\overline}

\renewcommand{\em}{\it}

\newcommand{\la}{\lambda}

\newcommand{\bg}{\begin{gathered}}
\newcommand{\eg}{\end{gathered}}

\newcommand{\bea}{\begin{eqnarray}}
\newcommand{\eea}{\end{eqnarray}}
\newcommand{\Sum}{\sum_{m,n=0}^\infty}

\newcommand{\red}[1]{\textcolor[rgb]{0.9, 0, 0}{#1}}

\title{Deformed Complex Hermite Polynomials}
\author[a]{S. Twareque Ali
\thanks{Work supported in part by  the Natural Sciences and Engineering Research Council of Canada (NSERC)}}

\author[b]{Mourad E. H. Ismail
\thanks{Research supported by  the DSFP at King Saud University in Riyadh.}}

\author[c]{Nurisya M. Shah}
\affil[a]{Department of Mathematics and Statistics, Concordia University, Montr\'eal, Qu\'ebec, Canada H3G 1M8

email: {\it twareque.ali@concordia.ca}}
\affil[b]{Department of Mathematics, University of Central Florida,
Orlando, Florida 32816 USA 
and King Saud University, Riyadh, Saudi Arabia 

email: {\it mourad.eh.ismail@gmail.com}}

\affil[c]{Department of Physics, Faculty of Science, Universiti Putra Malaysia, 43000 UPM Serdang, Selangor, Malaysia

email:{\it{risyams@gmail.com}}}

\begin{document}

\maketitle
\begin{abstract}
 We study a class of bivariate deformed  Hermite polynomials and some of their properties using classical analytic techniques and the Wigner map. We also prove the positivity of certain determinants formed by the deformed polynomials. Along the way we also work out some additional properties of the (undeformed) complex Hermite polynomials and their relationships to the standard Hermite polynomials (of a single real variable). 
\end{abstract}

{\bf AMS Subject Classification}: Primary 33C50, 33C70, Secondary 42C10, 30E05, 40B05.

{\bf Key words and phrases} 2$D$-Hermite polynomials, Rodrigues type formulas,  generating functions, evaluation of integrals,  the Wigner map, creation and annihilation operators, moment representation, positivity of certain determinants.

\section{Introduction}
The complex Hermite polynomials $\{H_{m,n}(z_1, z_2)\}$  may be  defined  by
\begin{equation}
\label{eqH-Ito}
H_{m,n}(z_1, z_2)=\sum_{k=0}^{m\wedge n}(-1)^k k!{m\choose k}{n\choose k}z_1^{m-k} z_2^{n-k}.
\end{equation}
Their exponential  generating function is
\begin{equation}
\label{eqH-GF}
\sum_{m,\,n=0}^{\infty}\,H_{m,n}(z_1, z_2)\frac{u^m\,v^n}{m!\,n!}=e^{uz_1+v  z_2 -uv}.
\end{equation}
 They satisfy the orthogonality relation,  \cite{Int:Int} and \cite{Gha}
 \begin{equation}
\label{H-Ito-OR}
\frac{1}{\pi}\int_{\R^2} H_{m,n}(x+iy,x-iy)\o{H_{p,q}(x+iy,x-iy)}\,e^{-x^2-y^2}\,dx\,dy
= m!\,n!\,\delta_{m,p}\,\delta_{n,q}.
\end{equation}
Many of their properties including a multilinear generating function are in \cite{Ism1} while their combinatorics have been studied in \cite{Ism:Sim} and \cite{Ism:Zen}.  New proofs of the Kibble-Slepian formula for the Hermite and Complex Hermite polynomials are in \cite{Ism:Zha}. The complex Hermite polynomials
 were introduced by Ito in \cite{Ito} and many of their properties have been developed in
 \cite{Ali:Bag:Hon,Cot:Gaz:Gor,Gha,Gha2,Ism1}, and \cite{Wun}--\cite{Wun2}.  They are in a class of polynomials
 presented in the very recent book by Dunkl  and Xu \cite[Chapter 2]{Dun:Xu}.

In this paper we  study the deformed complex Hermite polynomials $\{H_{m,n}^{(g)}(z_1, z_2)\}$. They  are defined through the generating function
\bea
\bg
\Sum \frac{u^m \, v^n}{m!\, n!} H_{m,n}^{(g)}(z_1, z_2) \qquad \qquad \qquad \qquad \\
= \exp((g_{1,1}u+ g_{1,2}v) z_1 + (g_{2,1}u+ g_{2,2}v) z_2 - (g_{1,1}u+ g_{1,2}v)(g_{2,1}u+ g_{2,2}v) ),
\eg
\label{eqGFHg}
\eea
where $g$ is the matrix
\bea
g :=  \left(\begin{matrix}
g_{1,1} & g_{1,2} \\
g_{2,1}  &  g_{2,2}
\end{matrix} \right).
\eea
A version of these polynomials have been studied in \cite{Wun,Wun2,Wun3}. In the setting adopted in this paper they were introduced in \cite{balshali}, where some preliminary properties, relating to orthogonality and growth were worked out as well as their relationship to a model of noncommutative quantum mechanics. 

In Section 2 we derive some of the properties of the deformed complex Hermite polynomials including their orthogonality relation, Rodrigues formula, and a moment representation. In Section 3 we show that certain Hankel determinants formed by the polynomials $\{H_{m,n}^{(g)}(z_1,  z_2)\}$ are nonnegative. This is done along the same lines of \cite{Bar:Ism,Ism2}, which was motivated by the earlier works of Karlin \cite{Kar1}--\cite{Kar2}, and the mammoth paper \cite{Kar:Sze} by Karlin and Szeg\H{o}.

\section{Some properties of $\{H_{m,n}^{(g)}(z_1,  z_2)\}$}
\begin{thm}\label{thm1}
Let $S^{(g)}$ be the operator defined by
\bea
(S^{(g)}f)(z_1, z_2) = f(g_{1,1} z_1 + g_{2,1}z_2, g_{1,2} z_1 + g_{2,2}z_2).
\eea
We have
\bea
H_{m,n}^{(g)}(z_1,  z_2) &=&
e^{-\partial_{z_1} \partial_{ z_2}}  S^{(g)} e^{\partial_{z_1} \partial_{z_2}}
H_{m,n}(z_1,  z_2),  \label{eq1.1.1}\\
H_{m,n}^{(g)}(z_1,  z_2)  &=& e^{-\partial_{z_1} \partial_{ z_2}} (g_{1,1}z_1+ g_{2,1}z_2)^m (g_{1,2}z_1+ g_{2,2}z_2)^n,  \label{eq1.1.2}\\
H_{m,n}^{(g)}(z_1,  z_2) &=& \sum_{j=0}^m \sum_{k=0}^{n} \binom{m}{j}\binom{n}{k}
g_{1,1}^j g_{2,1}^{m-j} g_{1,2}^k g_{2,2}^{n-k}  H_{j+k,n+m-j-k}(z_1,  z_2).
 \label{eq1.1.3}
\eea
\end{thm}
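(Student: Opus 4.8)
The plan is to reduce all three identities to the single operator representation
\be
\label{eqbase}
H_{m,n}(z_1,z_2)=e^{-\partial_{z_1}\partial_{z_2}}\,z_1^m z_2^n,
\ee
equivalently $e^{\partial_{z_1}\partial_{z_2}}H_{m,n}(z_1,z_2)=z_1^m z_2^n$, which I would read off directly from \eqref{eqH-GF}. Since $e^{uz_1+vz_2}$ is an eigenfunction of $\partial_{z_1}\partial_{z_2}$ with eigenvalue $uv$, applying $e^{-\partial_{z_1}\partial_{z_2}}=\sum_{k\ge0}\frac{(-1)^k}{k!}(\partial_{z_1}\partial_{z_2})^k$ to $e^{uz_1+vz_2}$ multiplies it by $e^{-uv}$, producing exactly the right-hand side of \eqref{eqH-GF}; comparing coefficients of $u^m v^n/(m!\,n!)$ gives \eqref{eqbase}. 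It is cleanest to regard $u,v$ as formal indeterminates throughout, so that $e^{\pm\partial_{z_1}\partial_{z_2}}$ acts coefficientwise on series whose coefficients are polynomials in $z_1,z_2$ (on which the operator terminates) and no analytic convergence is at issue.

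With \eqref{eqbase} available, I would establish \eqref{eq1.1.2} first, as the other two identities follow from it. The exponential generating function of the polynomials $(g_{1,1}z_1+g_{2,1}z_2)^m(g_{1,2}z_1+g_{2,2}z_2)^n$ factorizes, namely
\be
\sum_{m,n=0}^{\infty}\frac{u^m v^n}{m!\,n!}(g_{1,1}z_1+g_{2,1}z_2)^m(g_{1,2}z_1+g_{2,2}z_2)^n=\exp\big((g_{1,1}u+g_{1,2}v)z_1+(g_{2,1}u+g_{2,2}v)z_2\big).
\ee
Applying $e^{-\partial_{z_1}\partial_{z_2}}$ and using the eigenvalue computation as above multiplies the right-hand side by $\exp\big(-(g_{1,1}u+g_{1,2}v)(g_{2,1}u+g_{2,2}v)\big)$, which reproduces precisely the generating function \eqref{eqGFHg} of $\{H_{m,n}^{(g)}\}$. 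Matching coefficients of $u^m v^n/(m!\,n!)$ yields \eqref{eq1.1.2}.

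Identity \eqref{eq1.1.1} is then a one-line consequence: by \eqref{eqbase} and the definition of $S^{(g)}$,
\be
S^{(g)}e^{\partial_{z_1}\partial_{z_2}}H_{m,n}(z_1,z_2)=S^{(g)}\big(z_1^m z_2^n\big)=(g_{1,1}z_1+g_{2,1}z_2)^m(g_{1,2}z_1+g_{2,2}z_2)^n,
\ee
so applying $e^{-\partial_{z_1}\partial_{z_2}}$ to both sides and invoking \eqref{eq1.1.2} gives \eqref{eq1.1.1}. For \eqref{eq1.1.3} I would simply expand the two factors in \eqref{eq1.1.2} by the binomial theorem, writing $(g_{1,1}z_1+g_{2,1}z_2)^m(g_{1,2}z_1+g_{2,2}z_2)^n=\sum_{j=0}^m\sum_{k=0}^n\binom{m}{j}\binom{n}{k}g_{1,1}^j g_{2,1}^{m-j}g_{1,2}^k g_{2,2}^{n-k}\,z_1^{j+k}z_2^{m+n-j-k}$, and then apply $e^{-\partial_{z_1}\partial_{z_2}}$ termwise, using $e^{-\partial_{z_1}\partial_{z_2}}z_1^{j+k}z_2^{m+n-j-k}=H_{j+k,\,m+n-j-k}(z_1,z_2)$ from \eqref{eqbase}.

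The mathematics here is entirely elementary; I expect the only point requiring any care to be the legitimacy of interchanging the infinite-order operator $e^{\pm\partial_{z_1}\partial_{z_2}}$ with the infinite generating-function sums. Working in the ring of formal power series in $u,v$ with polynomial coefficients disposes of this at once, since the operator acts term-by-term and terminates on each coefficient, while the eigenvalue identity $e^{-\partial_{z_1}\partial_{z_2}}e^{\alpha z_1+\beta z_2}=e^{-\alpha\beta}e^{\alpha z_1+\beta z_2}$ is itself a formal-power-series identity in $\alpha,\beta$.
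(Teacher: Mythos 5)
Your proposal is correct and uses essentially the same method as the paper: comparison of exponential generating functions together with the eigenvalue identity $e^{\pm\partial_{z_1}\partial_{z_2}}e^{\alpha z_1+\beta z_2}=e^{\pm\alpha\beta}e^{\alpha z_1+\beta z_2}$, with \eqref{eq1.1.3} obtained from \eqref{eq1.1.2} by the binomial theorem. The only cosmetic difference is ordering --- the paper verifies \eqref{eq1.1.1} directly by the generating-function chain and notes that \eqref{eq1.1.2} follows similarly, whereas you prove \eqref{eq1.1.2} first and deduce \eqref{eq1.1.1} from the relation $e^{\partial_{z_1}\partial_{z_2}}H_{m,n}=z_1^m z_2^n$ --- which does not change the substance of the argument.
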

\begin{proof}
Multiply the right-hand side by $u^mv^n/(m!n!)$ and sum over $m,n \ge 0$ and use the generating function \eqref{eqH-GF}.  The result is
\bea
\bg
e^{-uv} e^{-\partial_{z_1} \partial_{ z_2}}  S^{(g)} e^{\partial_{z_1} \partial_{z_2}} e^{uz_1+vz_2}
=e^{-uv} e^{-\partial_{z_1} \partial_{ z_2}}  S^{(g)} e^{uv} e^{uz_1+vz_2} \\
= e^{-\partial_{z_1} \partial_{ z_2}}  \exp\left(u(g_{1,1}z_1+g_{2,1}z_2) +v(g_{1,2}z_1+ g_{2,2}z_2)\right)\\
= e^{-\partial_{z_1} \partial_{ z_2}} \exp\left(z_1(g_{1,1}u+g_{1,2}v) +z_2(g_{2,1}u+ g_{2,2}v)\right)\\
=  \exp\left(-(g_{1,1}u+g_{1,2}v)(g_{2,1}u+ g_{2,2}v)\right)
 \exp\left(z_1(g_{1,1}u+g_{1,2}v) +z_2(g_{2,1}u+ g_{2,2}v)\right)
\eg
\notag
\eea
and \eqref{eq1.1.1} follows. Similarly \eqref{eq1.1.2} follows from the generating function
\eqref{eqH-GF}.  Finally \eqref{eq1.1.3}  follows from \eqref{eq1.1.2}  and the binomial theorem.
\end{proof}

Note that the relation \eqref{eq1.1.2} is the Rodrigues formula for $H_{m,n}^{(g)}(z_1,  z_2)$. Also, as shown in \cite{balshali}, it is possible to rewrite (\ref{eq1.1.3}) in a somewhat different form for fixed $L= m+n$:
\begin{equation}
  H^{(g)}_{k, L-k}(z , \overline{z}) = \sum_{r = 0}^L M(g, L)_{rk} H_{r, L-r} (z, \overline{z}),
\label{al-form}
\end{equation}
where
\begin{equation}
M(g,L)_{rk} = \sum_{q=\max\{0,r+k-L\}}^{\min\{r,k\}}\binom{k}{q}
\binom{L-k}{r-q}g_{11}^{q}g_{21}^{k-q}g_{12}^{r-q}g_{22}^{L-k+q-r},  \quad 0\leq r,k \leq L\ .
\label{irrep-mat-elem}
\end{equation}

From \eqref{eq1.1.2} it  is clear that
\bea
H_{m,n}^{(g)}(z, \overline{z}) = H_{n,m}^{(h)}(z, \bar z), \textup{where}\;
h =  g = \left(\begin{matrix}
0 & 1 \\
1  & 0
\end{matrix} \right) .
\eea
We now establish the orthogonality relation (see also \cite{balshali,Wun3}).
\begin{thm}
The orthogonality relation
\bea
\notag
\int_{\R^2} H_{m,n}^{(g)}(z, \bar z) \overline{H_{p,q}^{(h)}(z, \bar z)} e^{-x^2-y^2} dxdy = 0
\eea
if $(m,n) \ne (p,q)$ holds if and only if
\bea
\label{eqcond}
h^* g =  \left(\begin{matrix}
\la_1 & 0 \\
0  & \la_2
\end{matrix} \right).
\eea
When \eqref{eqcond} holds then
\bea
\label{eqorth}
\int_{\R^2}  H_{m,n}^{(g)}(z, \bar z) \overline{H_{p,q}^{(h)}(z, \bar z)} \,  e^{-x^2-y^2}\, dx dy = m!\, n!\,  \la_1^m \, \la_2^n\,  \delta_{m,p}\, \delta_{n,q}.
\eea
\end{thm}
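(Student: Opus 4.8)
The plan is to collect all the integrals into a single four-variable generating function and reduce everything to one Gaussian integral, rather than expanding via the finite formula \eqref{eq1.1.3} and summing termwise. Write $z=x+iy$, so that $z_2=\o z$, and attach formal variables $u,v$ to the $H^{(g)}$-factor and $\sigma,\tau$ to the $H^{(h)}$-factor. By \eqref{eqGFHg} the first factor contributes $\exp(az+b\o z-ab)$ with $a=g_{1,1}u+g_{1,2}v$ and $b=g_{2,1}u+g_{2,2}v$, while the complex conjugate of the $H^{(h)}$ generating function contributes $\exp(\bar c\,\o z+\bar d\,z-\bar c\bar d)$ with $\bar c=\o h_{1,1}\sigma+\o h_{1,2}\tau$ and $\bar d=\o h_{2,1}\sigma+\o h_{2,2}\tau$. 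The first thing to pin down carefully is this conjugation: one must check that conjugating the generating function replaces $h$ by its entrywise conjugate $\o h$ and swaps the roles of $z$ and $\o z$, so that the conjugated exponent has the stated form.

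Next I would evaluate, using the $1/\pi$ normalization of \eqref{H-Ito-OR},
\[ \frac1\pi\int_{\R^2}\exp\bigl((a+\bar d)z+(b+\bar c)\o z-x^2-y^2\bigr)\,dx\,dy\cdot e^{-ab-\bar c\bar d}, \]
by the standard identity $\frac1\pi\int_{\C}e^{\alpha z+\beta\o z-|z|^2}\,dA(z)=e^{\alpha\beta}$, valid termwise as an identity of formal power series. The key simplification is that $(a+\bar d)(b+\bar c)=ab+a\bar c+b\bar d+\bar c\bar d$, so the quadratic terms $-ab$ and $-\bar c\bar d$ cancel and the entire generating function collapses to $\exp(a\bar c+b\bar d)$. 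Expanding $a\bar c+b\bar d$ and collecting the monomials $u\sigma,u\tau,v\sigma,v\tau$, I would match the four coefficients with the entries of $h^*g$, obtaining
\[ \sum_{m,n,p,q}\frac{u^mv^n\sigma^p\tau^q}{m!\,n!\,p!\,q!}\,\frac1\pi\!\int_{\R^2}\! H_{m,n}^{(g)}\,\o{H_{p,q}^{(h)}}\,e^{-x^2-y^2}\,dx\,dy=\exp\!\bigl((h^*g)_{1,1}u\sigma+(h^*g)_{2,1}u\tau+(h^*g)_{1,2}v\sigma+(h^*g)_{2,2}v\tau\bigr). \]

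Both directions then drop out of this formula. For the ``if'' direction, condition \eqref{eqcond} annihilates the cross terms $u\tau$ and $v\sigma$, so the right-hand side becomes $\exp(\la_1u\sigma+\la_2v\tau)=\sum_{m,n}\la_1^m\la_2^n(u\sigma)^m(v\tau)^n/(m!\,n!)$; reading off the coefficient of $u^mv^n\sigma^p\tau^q$ yields \eqref{eqorth}, including the vanishing when $(m,n)\neq(p,q)$. For the converse, the coefficients of $u\tau$ and $v\sigma$ are precisely the integrals for $(m,n,p,q)=(1,0,0,1)$ and $(0,1,1,0)$, so if every off-diagonal integral vanishes then $(h^*g)_{2,1}=(h^*g)_{1,2}=0$, which is \eqref{eqcond}. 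I expect the only real obstacle to be the careful bookkeeping of the conjugation together with the cancellation of the quadratic terms; once $\exp(a\bar c+b\bar d)$ and its identification with $h^*g$ are in hand, the rest is routine coefficient-matching of formal power series, the interchange of sum and integral being justified termwise because each complex Hermite moment integral converges absolutely.
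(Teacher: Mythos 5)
Your proof is correct and follows essentially the same route as the paper: multiply the two generating functions (one conjugated), evaluate the resulting Gaussian integral via $\frac{1}{\pi}\int_{\C}e^{\alpha z+\beta\bar z-|z|^2}\,dx\,dy=e^{\alpha\beta}$, and observe that the quadratic terms cancel so the exponent collapses to a bilinear form whose coefficients are exactly the entries of $h^*g$. Your handling of the converse (reading off the coefficients of $u\tau$ and $v\sigma$, i.e.\ the $(1,0,0,1)$ and $(0,1,1,0)$ integrals) is a more explicit version of the paper's statement that orthogonality forces the generating function to depend only on the products $u_1u_2$ and $v_1v_2$, and the only discrepancy is the overall factor of $\pi$, which is already present as a normalization inconsistency between \eqref{H-Ito-OR} and \eqref{eqorth} in the paper itself.
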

\begin{proof}
To save space we denote rows 1 and 2 of
\bea
 \left(\begin{matrix}
g_{1,1} & g_{1,2} \\
g_{2.1} &g_{2,2}
\end{matrix} \right)
 \left(\begin{matrix}
u \\
v
\end{matrix} \right)
\notag
\eea
by $g_1(u,v)$ and $g_2(u,v)$, respectively.  It is straight forward to use the generating function
\eqref{eqGFHg} and see  that for  real $u_j, v_j, j=1,2$ we have
\bea
\notag
\bg
\sum_{m,n=0}^\infty \frac{u_1^m  v_1^n u_2^p v_2^q}{m!\, n!\,p!\, q!} \int_{\R^2} H_{m,n}^{(g)}(z, \bar z) \overline{H_{p,q}^{(h)}(z, \bar z)} e^{-x^2-y^2} dxdy  \\
= \int_{\R^2}   \exp(g_1(u_1,v_1) z  + g_2(u_1,v_1) \bar z - g_1(u_1,v_1)g_2(u_1,v_1)) \\\overline{\exp(h_1(u_2,v_2) z  + h_2(u_2,v_2) \bar z - h_1(u_2,v_2) h_2(u_2,v_2))}
 e^{-x^2-y^2} dxdy \\
 = \exp(-g_1(u_1,v_1)g_2(u_1,v_1) - \o{ {h_1}}(u_2,v_2) \o {h_2}(u_2,v_2)) \\
 \times
  \int_{\R^2}   \exp((g_1(u_1,v_1)+  \o {h_2}(u_2,v_2))z +(g_2(u_1,v_1)  + \o{h_1}(u_2,v_2))      \bar z)  e^{-x^2-y^2} dxdy.
\eg
\eea
By evaluating the integral we see that the integral in the last line  is
\bea
\bg
 \exp\left((g_1(u_1,v_1)+  \o {h_2}(u_2,v_2))(g_2(u_1,v_1)  + \o{h_1}(u_2,v_2))\right)
\eg
\notag
\eea
We have orthogonality if and only if
\bea
\notag
\sum_{m,n=0}^\infty \frac{u_1^m  v_1^n u_2^p v_2^q}{m!\, n!\,p!\, q!} \int_{\R^2} H_{m,n}^{(g)}(z, \bar z) \overline{H_{p,q}^{(h)}(z, \bar z)} e^{-x^2-y^2} dxdy = f(u_1u_2, v_1v_2),
\eea
for some function $f$ of two variables.  This is equivalent to the condition \eqref{eqcond}.
\end{proof}

It is clear from \eqref{eqorth} that we can rescale $H_{m,n}^{(g)}(z, \bar z)$ and
$H_{m,n}^{(h)}(z, \bar z)$ to make $\la_1 = \la_2=1,$ which we now assume. Therefore we assume
that
\bea
h = (g^*)^{-1}.
\eea
Thus we have the orthogonality relation
\bea
\label{eqorthrel}
\int_{\R^2} H_{m,n}^{(g)}(z, \bar z) \overline{H_{p,q}^{(h)}(z, \bar z)} e^{-x^2-y^2} dxdy = m!n! \delta_{m,p}\delta_{n,q},
\eea
 where $h = (g^*)^{-1}.$

\begin{thm}
Let $z = x+iy$. The polynomials $\{H_{m,n}^{(g)}(z_1,  z_2)\}$ have the integral representation
\bea
H_{m,n}^{(g)}(iz,  i \bar z) = \frac{i^{m+n}}{\pi}\int_{\R^2}(g_{1,1} \zeta  + g_{2,1}\o\zeta)^m
 (g_{1,2} \zeta + g_{2,2}\o\zeta)^n  e^{-(r-x)^2-(s-y)^2}\, dr ds,
\label{eqmomentrep}
\eea
where $\zeta := r+ i s$.
\end{thm}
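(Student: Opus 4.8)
The plan is to verify \eqref{eqmomentrep} at the level of exponential generating functions, exactly in the spirit of the two preceding theorems. The first step is to multiply the right-hand side of \eqref{eqmomentrep} by $u^m v^n/(m!\,n!)$ and sum over $m,n\ge 0$. Since the factor $i^{m+n}$ merges with $u^m v^n$ to form $(iu)^m(iv)^n$, the two binomial powers collapse into a single exponential inside the integral, giving
\bea
\bg
\sum_{m,n=0}^\infty \frac{u^m v^n}{m!\,n!}\,\frac{i^{m+n}}{\pi}\int_{\R^2}(g_{1,1}\zeta+g_{2,1}\o\zeta)^m (g_{1,2}\zeta+g_{2,2}\o\zeta)^n\, e^{-(r-x)^2-(s-y)^2}\,dr\,ds \\
= \frac1\pi\int_{\R^2}\exp\!\left(iu(g_{1,1}\zeta+g_{2,1}\o\zeta)+iv(g_{1,2}\zeta+g_{2,2}\o\zeta)\right) e^{-(r-x)^2-(s-y)^2}\,dr\,ds .
\eg
\notag
\eea
The interchange of summation and integration is justified because the series is the restriction of an entire function of $(u,v)$ and the Gaussian weight dominates the $\zeta$-growth uniformly on compact sets.

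Next I would abbreviate $\al := g_{1,1}u+g_{1,2}v$ and $\bt := g_{2,1}u+g_{2,2}v$ and write $\zeta=r+is$. Regrouping the exponent by $\zeta$ and $\o\zeta$ and then separating real and imaginary parts, one checks that the integrand's linear exponent becomes $i(\al+\bt)r-(\al-\bt)s$, so the double integral factors into two decoupled one-dimensional Gaussian integrals. Each is evaluated by the completion-of-square identity $\pi^{-1/2}\int_{\R} e^{ct}e^{-(t-a)^2}\,dt=e^{ca+c^2/4}$, which remains valid for complex $c$ by analytic continuation since the quadratic part still has negative real coefficient. This absorbs the prefactor $1/\pi$, and after collecting terms the quadratic contribution simplifies as $-(\al+\bt)^2/4+(\al-\bt)^2/4=-\al\bt$. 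Hence the generating function of the right-hand side equals $\exp\!\left(i(\al+\bt)x-(\al-\bt)y-\al\bt\right)$.

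Finally I would compute the generating function of the left-hand side directly from \eqref{eqGFHg}: setting $z_1=iz$ and $z_2=i\o z$ with $z=x+iy$ turns its exponent into $i\al z+i\bt\o z-\al\bt$, which equals $i(\al+\bt)x-(\al-\bt)y-\al\bt$, precisely the expression obtained for the right-hand side. Comparing coefficients of $u^m v^n$ on the two sides then yields \eqref{eqmomentrep}. The only genuinely delicate point is the evaluation of the two Gaussian integrals with complex linear coefficients $c$, together with the bookkeeping of the constant $1/\pi$ and the half-integer powers of $\pi$; once the analytic-continuation formula is in hand the remainder is a routine symbolic comparison of the two generating functions. (One could instead read the identity as expressing the Rodrigues operator $e^{-\partial_{z_1}\partial_{z_2}}$ of \eqref{eq1.1.2} as a Gaussian integral operator acting on the monomials $(g_{1,1}z_1+g_{2,1}z_2)^m(g_{1,2}z_1+g_{2,2}z_2)^n$, but the generating-function route above is cleaner and sidesteps the two-variable kernel.)
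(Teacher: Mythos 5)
Your proposal is correct and follows essentially the same route as the paper: turn both sides into exponential generating functions via \eqref{eqGFHg}, evaluate the resulting Gaussian integral by completing the square, and match $\exp\bigl(i\al z+i\bt\bar z-\al\bt\bigr)$ on the two sides. The only cosmetic difference is that the paper first substitutes $\zeta\mapsto\zeta+z$ to center the Gaussian at the origin before integrating, whereas you absorb that shift into the formula $\pi^{-1/2}\int_{\R}e^{ct}e^{-(t-a)^2}\,dt=e^{ca+c^2/4}$; the computations are identical (and your quadratic bookkeeping $-(\al+\bt)^2/4+(\al-\bt)^2/4=-\al\bt$ is in fact cleaner than the paper's display, which drops a minus sign in its final line).
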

\begin{proof}
First replace $\zeta$ by $\zeta + z$ in the right-hand side of \eqref{eqmomentrep} then
multiply the right-hand side of \eqref{eqmomentrep} by $u^m v^n/(m!n!)$ and add the terms for
$m,n \ge 0$. This sum equals
\bea
\notag
\bg
\frac{1}{\pi}  \exp\left( i u(g_{1,1} z  + g_{2,1}\bar z) +iv (g_{1,2} z + g_{2,2}\bar z)\right) \\
\times  \int_{\R^2} \exp\left(-r^2 - s^2 + i u(g_{1,1} \zeta  + g_{2,1}\o\zeta) +iv (g_{1,2} \zeta + g_{2,2}\o\zeta)\right) \, drds.
\eg
\eea
The integral in the above expression is given by
\bea
\notag
\bg
\int_{\R^2} \exp\left(-r^2 - s^2 + i ur(g_{1,1}    + g_{2,1}) + ivr(g_{1,2}+ g_{2,2}) + us( g_{2,1}) -g_{1,1})
 +vs (g_{2,2}-g_{1,2} )\right) \, drds \\
 =  \exp\left(   (u( g_{2,1} -g_{1,1})
 +v (g_{2,2}-g_{1,2}))^2/4 -  (u(g_{1,1}    + g_{2,1}) + v(g_{1,2}+ g_{2,2}))^2/4\right) \\
 = \exp\left( (ug_{1,1} + v g_{1,2})(ug_{2,1} + v g_{2,2})\right).
 \eg
\eea
Therefore the exponential generating function of the right-hand side of \eqref{eqmomentrep} simplifies
to the exponential generating function of the left-hand side as in \eqref{eqGFHg} and the proof is complete.
\end{proof}

For many applications of moment techniques to special functions we refer the reader to
\cite{Ism}.

\section{A positivity result}

In this section we establish the positivity of a Hankel determinant formed by the $H_{m,n}^g$ polynomials.

\begin{thm} \label{thm2}
Assume that $g_{1,2} = \overline{g_{2,1}}$  and  $g_{2,2} = \overline{g_{1,1}}$.
Then the  determinant formed by $(-i)^{m+n}(-1)^sH^{(g)}_{m+s,n+s}(iz,i\bar z): 0\le m,n \le N$ is positive for all $N \ge 0$.
\end{thm}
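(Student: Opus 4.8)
The plan is to recognize the matrix entries as the moments of a positive measure, so that the determinant becomes a Gram determinant. Starting from the integral representation \eqref{eqmomentrep} and writing $\zeta = r+it$ (I rename the imaginary part to $t$, since $s$ now denotes the shift), set $A := g_{1,1}\zeta + g_{2,1}\o\zeta$. The hypotheses $g_{1,2} = \o{g_{2,1}}$ and $g_{2,2} = \o{g_{1,1}}$ force the second linear form to be the complex conjugate of the first,
\[
g_{1,2}\zeta + g_{2,2}\o\zeta = \o{g_{2,1}}\,\zeta + \o{g_{1,1}}\,\o\zeta = \o{A}.
\]
Applying \eqref{eqmomentrep} with the shifted indices $m+s,\,n+s$ and using $(-i)^{m+n}i^{m+n}=1$ together with $i^{2s}(-1)^s=(-1)^{2s}=1$, the prefactor $(-i)^{m+n}(-1)^s$ cancels every phase and leaves
\[
(-i)^{m+n}(-1)^s H^{(g)}_{m+s,n+s}(iz,i\o z) = \frac{1}{\pi}\int_{\R^2} A^{m}\,\o{A}^{n}\,|A|^{2s}\, e^{-(r-x)^2-(t-y)^2}\,dr\,dt .
\]

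Thus each entry of the matrix equals $\langle A^{m},A^{n}\rangle$ for the Hermitian form $\langle f,h\rangle := \frac{1}{\pi}\int_{\R^2} f\,\o h\,|A|^{2s}e^{-(r-x)^2-(t-y)^2}\,dr\,dt$, whose weight $|A|^{2s}e^{-(r-x)^2-(t-y)^2}$ is nonnegative and, for $s\ge 0$, integrable against all polynomials. Consequently the determinant in the statement is exactly the Gram determinant of the functions $1,A,A^{2},\dots,A^{N}$ with respect to $\langle\cdot,\cdot\rangle$. A Gram determinant is always nonnegative, and it is strictly positive precisely when these functions are linearly independent in the associated $L^2$ space, so the whole argument reduces to this independence.

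For the independence I would argue by degree. As long as $(g_{1,1},g_{2,1})\ne(0,0)$ — which we may assume, since otherwise the polynomials degenerate — the function $A = (g_{1,1}+g_{2,1})\,r + i(g_{1,1}-g_{2,1})\,t$ is a nonconstant affine function of $(r,t)$, so $A^{k}$ is a polynomial of exact degree $k$. A vanishing combination $\sum_{k=0}^{N}c_kA^{k}=0$ then has top-degree part $c_N A^{N}$, a nonzero homogeneous polynomial unless $c_N=0$; descending induction forces every $c_k=0$. Because the weight is strictly positive off the null set $\{A=0\}$, a polynomial that vanishes $\langle\cdot,\cdot\rangle$-almost everywhere vanishes identically, so linear independence as polynomials passes to linear independence in $L^2$, and the Gram determinant is strictly positive for every $N\ge 0$.

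The one genuine obstacle is this strict positivity — equivalently, the non-degeneracy of the form. One must verify both that the Gaussian-times-$|A|^{2s}$ weight does not collapse the inner product (it does not, since $|A|^{2s}$ vanishes only on a line, a set of planar measure zero) and that $A$ is truly nonconstant (guaranteed by the first column of $g$ being nonzero). Everything preceding that is a direct unwinding of the moment representation \eqref{eqmomentrep}, and the positive-semidefiniteness is automatic from the Gram structure.
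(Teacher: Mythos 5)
Your proof is correct, and it reaches the conclusion by a different route than the paper. Both arguments start from the same place: the moment representation \eqref{eqmomentrep} together with the observation that the hypotheses $g_{1,2}=\overline{g_{2,1}}$, $g_{2,2}=\overline{g_{1,1}}$ make the second linear form the conjugate of the first (the paper writes this as $V_k=\overline{U_k}$), so that each matrix entry becomes a moment $\frac{1}{\pi}\int A^m\overline{A}^n\,|A|^{2s}e^{-(r-x)^2-(t-y)^2}\,dr\,dt$. From there the paths diverge. The paper expands the determinant row by row into a multiple integral over $\R^{2N}$, factors out $\prod_j U_j^{j-1}\prod_k|U_k|^{2s}$ to expose a Vandermonde determinant in the $V_k$, and then symmetrizes over the symmetric group $S_N$ (an Andr\'eief/Heine-type identity) to obtain the explicit formula \eqref{eqDeltaN}, namely $\Delta_N=\frac{1}{N!}\int_{\R^{2N}}\prod_k|U_k|^{2s}\prod_{j<k}|U_j-U_k|^2\cdots$, from which positivity is read off and which is itself recorded as a corollary. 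You instead identify the matrix abstractly as the Gram matrix of $1,A,\dots,A^N$ in the weighted $L^2$ space and invoke positive semidefiniteness of Gram matrices, reducing strict positivity to linear independence, which you verify by a degree argument plus the fact that the weight vanishes only on a null set. Your route is softer and shorter, and it is actually more careful on one point the paper glosses over: strict (rather than non-) positivity requires $A$ to be nonconstant, i.e. $(g_{1,1},g_{2,1})\neq(0,0)$, a nondegeneracy hypothesis that is implicit in the paper as well (its formula \eqref{eqDeltaN} also collapses to zero when $A\equiv 0$), and which you state explicitly. What the paper's computation buys in exchange is the closed-form Vandermonde-squared integral representation of the determinant — strictly more information than positivity alone — which your Gram argument does not produce.
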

\begin{proof}
Let $\Delta_N$ be the determinant whose elements  are
$$(-i)^{m+n+2s} \pi H_{m+s,n+s}(iz,i\bar z): 0 \le m,n < N.$$
It is convenient to set
\bea
U_k = g_{1,1}\zeta_k+ g_{2,1}\o{\zeta_k}, \quad V_k = \o{U_k} =
g_{1,2}\zeta_k + g_{2,2}\o{\zeta_k}.
\eea
The integral representation \eqref{eqmomentrep} implies that  $\Delta_N$ has the integral representation, with $\zeta_k = r_k + i s_k$,
\bea
\notag
\bg
 \int_{\R^{2N}}  \begin{vmatrix}
1  &V_1& \dotsm & (V_1)^{N-1} \\
U_2 & U_2V_2   & \dotsm & U_2 (V_2)^{N-1}  \\
\vdots & \vdots & \dotsm & \vdots         \\
 (U_N)^{N-1}  &  (U_N)^{N-1}V_N & \dotsm & (U_N)^{N-1}(V_N)^{N-1}
\end{vmatrix} \\
\times  \prod_{k=1}^N (U_k)^s
(V_k)^s\;  \prod_{j=1}^N e^{-(r_j-x)^2-(s_j-y)^2}\, dr_j\, ds_j\\
=  \int_{\R^{2N}}  \prod_{k=1}^N |U_k|^{2s}
 \prod_{j=1}^N (U_j)^{j-1}
 \begin{vmatrix}
1  & V_1  & \dotsm & (V_1)^{N-1} \\
 1 &  V_2  & \dotsm &   (V_2)^{N-1}  \\
\vdots & \vdots & \dotsm & \vdots         \\
 1  &   V_N & \dotsm &  (V_N)^{N-1}
\end{vmatrix}\\
\times  \prod_{j=1}^N e^{-(r_j-x)^2-(s_j-y)^2}\,  dr_j\, ds_j.
\eg
\eea
Now apply a permutation on  the tuples $(r_j,s_j)$. If reorder the Vandermonde determinant with indices in increasing order we should multiply  the answer by the sign of the permutation,
say sign$(\sigma)$.  We then sum over $\sigma$ in the symmetric group $S_N$ and divide by the size of $S_N$, that is we divide by $N!$\red{.}  Therefore
\bea
\Delta_N = \frac{1}{N!}  \int_{\R^{2N}}\; \prod_{k=1}^N |U_k|^{2s}\;
\left[ \prod_{1 \le j < k \le N} \left|U_j- U_k\right|^2\right]
\prod_{j=1}^N dr_j\, ds_j.
\label{eqDeltaN}
\eea
\end{proof}
\begin{cor}
Set $z = x+iy$ and let $\Delta_N$ be the determinant whose elements are
$$(-i)^{m+n} \pi H_{m,n}(iz,i\bar z): 0 \le m,n < N.$$
Then $\Delta_N$ is given by \eqref{eqDeltaN}.
\end{cor}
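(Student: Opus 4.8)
The plan is to read off this Corollary as the specialization of Theorem \ref{thm2} to the identity deformation $g = I$ and vanishing shift $s = 0$. First I would check that $g = I$ satisfies the hypotheses of Theorem \ref{thm2}: with $g_{1,1} = g_{2,2} = 1$ and $g_{1,2} = g_{2,1} = 0$, the conditions $g_{1,2} = \overline{g_{2,1}}$ and $g_{2,2} = \overline{g_{1,1}}$ reduce to $0 = \overline{0}$ and $1 = \overline{1}$, both trivially true. Substituting these values into the defining generating function \eqref{eqGFHg} collapses it to \eqref{eqH-GF}, so that $H_{m,n}^{(I)}(z_1, z_2) = H_{m,n}(z_1, z_2)$. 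Hence the entries $(-i)^{m+n}\pi H_{m,n}(iz, i\bar z)$ of the present determinant are exactly the entries occurring in the proof of Theorem \ref{thm2} once one sets $g = I$ and $s = 0$.

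With this identification, I would run the same computation already carried out for Theorem \ref{thm2}. For $g = I$ the moment representation \eqref{eqmomentrep} becomes
\bea
\notag
(-i)^{m+n}\pi\, H_{m,n}(iz, i\bar z) &=& \int_{\R^2} \zeta^m\, \overline{\zeta}^{\,n}\, e^{-(r-x)^2 - (s-y)^2}\, dr\, ds,
\eea
with $\zeta = r + is$, so each matrix entry of $\Delta_N$ is a single Gaussian integral of $\zeta^m \overline{\zeta}^{\,n}$. Writing $U_k = \zeta_k$ and $V_k = \overline{\zeta_k}$ --- the values at $g = I$ of the quantities introduced in the proof of Theorem \ref{thm2} --- the determinant factors, just as before, into a product of two Vandermonde-type determinants in the $U_k$ and the $V_k$; symmetrizing over $S_N$ and dividing by $N!$ produces \eqref{eqDeltaN}.

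Finally, taking $s = 0$ removes the factor $\prod_k |U_k|^{2s}$ and replacing $U_k$ by $\zeta_k$ turns $\prod_{j<k}|U_j - U_k|^2$ into $\prod_{j<k}|\zeta_j - \zeta_k|^2$, which is the asserted form of $\Delta_N$ in \eqref{eqDeltaN}. There is no genuine obstacle here beyond the bookkeeping of constants: one only checks that $(-i)^{m+n}$ cancels the $i^{m+n}$ of \eqref{eqmomentrep} (since $(-i)^{m+n} i^{m+n} = 1$) and that the explicit $\pi$ offsets the $1/\pi$ there, leaving no stray factors. As there is neither deformation nor shift, the Corollary is an immediate consequence of the argument given for Theorem \ref{thm2}.
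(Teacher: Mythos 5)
Your proposal is correct and matches the paper's (implicit) intent exactly: the Corollary is stated without proof precisely because it is the specialization of the argument for Theorem \ref{thm2} to $g=I$ (which trivially satisfies the reality conditions, and for which \eqref{eqGFHg} reduces to \eqref{eqH-GF}) and $s=0$, so that $U_k=\zeta_k$, the factor $\prod_k|U_k|^{2s}$ disappears, and \eqref{eqDeltaN} follows verbatim. Your bookkeeping of the constants $(-i)^{m+n}i^{m+n}=1$ and $\pi\cdot\frac{1}{\pi}=1$ is also the right check.
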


\section{The Wigner map and the polynomials $\{H_{m,n}\}$}\label{wigner-map}
To proceed further, we write the (undeformed) polynomials $H_{m,n}$ as basis elements of the
Hilbert space $L^2(\mathbb C, d\nu (z, \overline{z}))$, where $d\nu(z, \overline{z})= e^{-\vert z\vert^2}\dfrac {dx\; dy}{\pi}$. It is well known (see, for example \cite{Gha}) that the vectors $h_{m,n} = \dfrac {H_{m,n}}{\sqrt{m!n!}}$ form an orthonormal basis of $L^2(\mathbb C, d\nu (z, \overline{z}))$ and they may be obtained by the action of two differential operators on the
{\em ground state} vector $h_{0,0} \in
L^2(\mathbb C, d\nu (z, \overline{z}))$, which is the constant function, taking the
value $1$ everywhere:
\be
  h_{m,n} (z, \overline{z}) = \frac {(z - \partial_{\overline{z}})^m\;
  (\overline{z} - \partial_z)^n}{\sqrt{m!n!}}\; h_{0,0}\; , \qquad m,n =0,1,2,
  \ldots , \infty\; .
\label{compherm-basis}
\en
Furthermore, the operators
\be
 a_1 = \partial_z , \;\; a_1^\dag = z - \partial_{\overline{z}}, \qquad
 a_2 = \partial_{\overline{z}} , \;\; a_2^\dag = \overline{z} - \partial_z,
\label{comp-ccr}
\en
satisfying the commutation relations,
\be
  [a_i, a^\dag_j ] = \delta_{ij}, \quad i,j = 1,2, \qquad [a_1 , a_2] =0,
\label{comp-ccr2}
\en
form an irreducible set on $L^2(\mathbb C, d\nu (z, \overline{z}))$. In terms of these operators,
\be
  h_{m,n} (z, \overline{z}) = \frac {(a_1^\dag)^m\;
  (a_2^\dag)^n}{\sqrt{m!n!}}\; h_{0,0}\; , \qquad m,n =0,1,2,
  \ldots , \infty\; .
\label{compherm-basis2}
\en

We next define the {\em Wigner map} (see, for example, \cite{aag-book}) in an abstract setting. Let $\h$ be an infinite dimensional, abstract, separable Hiblert space over the complexes and let $\{\phi_n\}_{n=0}^\infty$ be an orthonormal basis of it. Define the two (annihilation and creation) operators $a, a^\dag$,
\be
  a\phi_n = \sqrt{n}\phi_{n-1}, \quad a\phi_0 = 0, \qquad a^\dag\phi_n = \sqrt{n+1}\phi_{n+1}\; .
\label{crannop}
\ee
Then,
\be
  \phi_n = \frac {(a^\dag)^n}{\sqrt{n!}}\phi_0, \qquad n =0,1,2, \ldots , \infty\; .
\label{onbasis}
\en

For $z \in \mathbb C$ define the unitary {\em displacement} operator
\be
   D(z, \overline{z}) = e^{za^\dag - \overline{z}a} =  e^{za^\dag}  e^{-\overline{z}a} e^{-\frac {\vert z\vert^2}2} = e^{-\overline{z}a}e^{za^\dag}e^{\frac {\vert z\vert^2}2} \;.
\label{disp-op}
\ee
Let $\mathcal B_2 (\h)$ denote the set of all Hilbert-Schmidt operators on $\h$. It is well known that this is a Hilbert space under the scalar product
$$ \langle X \mid Y\rangle_{\mathcal B_2} = \text{Tr}[X^* Y], \qquad X, Y \in
\mathcal B_2 (\h)\; ,$$
and the vectors $\{\vert\phi_n\rangle\langle\phi_m\vert, \;\; m,n = 0,1,2, \ldots , \infty\}$, form an orthonormal basis of $\mathcal B_2 (\h)$.
The Wigner map is then defined as the linear transformation, $\mathcal W : \mathcal B_2 (\h) \longrightarrow L^2(\mathbb C, d\nu (z, \overline{z}))$:
\be
   (\mathcal W X)(z, \overline{z}) = e^{\frac{\vert z\vert^2}2}\;\text{Tr}\; [D(z, \overline{z})^* X]\;, \qquad X \in \mathcal B_2 (\h)\; .
\label{wigmap}
\ee
This map is well known to be a linear isometry, i.e.
$$ \Vert \mathcal W X\Vert^2_{L^2} = \int_{\mathbb C}\vert (\mathcal W X)(z, \overline{z})\vert^2\; d\nu (z, \overline{z}) = \text{Tr}[X^* X] = \Vert X\Vert^2_{\mathcal B_2}\; , $$
(being first defined on finite rank elements $X$ and then extended by continuity).

For any operator $B$ on $\h$, let us define the two commuting operators, $B_\ell, \; B_r$ on
$\mathcal B_2 (\h)$:
\be
B_\ell (X) = BX, \qquad B_r (X) = XB^*, \qquad X \in \mathcal B_2 (\h)\; ;
\label{left-right-ops}
\en
we shall be interested in their Wigner transformed versions, $\mathcal W \{B_\ell, \; B_r\}\mathcal W^*$ on the Hilbert space $L^2(\mathbb C,d\nu (z, \overline{z}))$. Using (\ref{disp-op}) to write (\ref{wigmap}) in the form
$$
(\mathcal W X)(z, \overline{z}) = \text{Tr}[e^{-za^\dag}Xe^{\overline{z} a}]\;
    e^{z\overline{z}} = \text{Tr}[e^{\overline{z}a}Xe^{-{z} a^\dag}]
  $$
and differentiating with respect to $z$ and $\overline{z}$, we get for the Wigner
transformed versions of the operators $a_\ell, \; a_\ell^\dag,\; a_r, \; a_r^\dag$,
corresponding to the $a, \; a^\dag$ in (\ref{crannop}),
\bea
\mathcal W a_\ell \mathcal W^* = \partial_{\overline{z}} = a_2 \; ,&\qquad&
 \mathcal W a^\dag_\ell \mathcal W^* = \overline{z} - \partial_z = a_2^\dag\; ,\nonumber\\
 \mathcal W a_r \mathcal W^* = -\partial_z = -a_1 \; ,&\qquad&
 \mathcal W a^\dag_r \mathcal W^* = - z + \partial_{\overline{z}} = -a_1^\dag\; .
 \label{wig-tr-cran}
 \eea
 From this, and since $\mathcal W (\vert\phi_0\rangle\langle \phi_0\vert)= h_{0,0}$, we easily get
 \be
 \mathcal W (\vert\phi_n\rangle\langle \phi_m\vert) = \frac {(-a_1^\dag)^m\; (a_2^\dag)^n}{\sqrt{m!\;n!}}h_{0,0} = (-1)^m h_{m,n}\; .
\label{wig-trans-ccomp-her}
 \en

   The Hilbert space $\h$ was taken to be an abstract space. As a concrete realization let us take
it to be the space $L^2(\mathbb R, \dfrac{e^{-x^2}\;dx}{\sqrt{\pi}})$, on which $a = \dfrac 1{\sqrt{2}}\partial_x$ and
$a^\dag = \dfrac 1{\sqrt{2}}(2x - \partial_x)$. Also, $\phi_0$ is the constant vector which is equal to one everywhere and
\be
  \sqrt{2^n\;n!}\;\phi_n (x) = (2x - \partial_x)^n \phi_0 (x) = H_n (x),
\label{real-herm}
\en
where the $H_n(x)$ are the real Hermite polynomials, \cite{Ism},
$$
   H_n(x) = (-1)^n e^{x^2}\frac {\partial^n}{\partial x^n} e^{-x^2}\; . $$
Furthermore, $\mathcal B_2 (\h )$ can now be identified with $L^2(\mathbb R^2,
\dfrac{e^{-x^2 -y^2}\;dx\;dy}{\pi})$, so that
$$(\vert\phi_n\rangle\langle\phi_m\vert) (x,y) = \frac 1{\sqrt{2^{n+m}\;n!\;m!}}H_m (x)H_n (y)\; .$$
Thus, from (\ref{wigmap}) and (\ref{wig-trans-ccomp-her})
\beano
  H_{m,n} (z , \overline{z} )  & = & \frac {(-1)^n}{\sqrt{2^{m+n}}}\mathcal W (\vert H_n\rangle
             \langle H_m\vert )(z , \overline{z} )  \\
             & = & \frac {(-1)^n e^{\frac {\vert z \vert^2}2}}{\sqrt{2^{m+n}}}\; \text{Tr} [D (z, \overline{z})^* \vert H_n\rangle\langle H_m\vert]
\enano
To compute the trace in the above formula explicitly, we not that  (see, e.g., \cite{aag-book})  for
any $f \in L^2(\mathbb R, \dfrac{e^{-x^2}\;dx}{\sqrt{\pi}})$,
$$
   [D (z, \overline{z})^* f](u) = e^{-i(x + \sqrt{2}u)y}f(u + \sqrt{2}x), \qquad z = x+iy. $$
We thus have the result,
\begin{thm} \label{comp-real-hermite-thm}
The complex Hermite polynomials are  Wigner transforms of bilinear products of the real Hermite polynomials:
\bea
H_{m,n} (z , \overline{z} )& = & \frac {(-1)^n}{\sqrt{2^{m+n}}}\mathcal W (\vert H_n\rangle
             \langle H_m\vert )(z , \overline{z} ) \nonumber \\
   & = & \frac {(-1)^n}{\sqrt{2^{m+n}}}\int_{\mathbb R} e^{-(u + \frac {\overline{z}}{\sqrt{2}})^2} H_m (u) H_n (u + \frac 1{\sqrt{2}} ( z + \overline{z})) \; \frac {du}{\sqrt{\pi}}\; .
\label{comp-real-herm}
\eea
In fact we have the more general result
\bea
H_{m,n} (z , \overline{z}) =  \frac{(-1)^n}{2^{(m+n)/2}}\int_\R e^{-(u+a)^2} H_m(u+b) H_n(u+c) \frac{du}{\sqrt{\pi}},
\label{eqHmnInt}
\eea
where
\bea
b = a + z/\sqrt{2},\qquad  a-c = \bar z/\sqrt{2}.
\eea
\end{thm}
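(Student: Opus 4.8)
The plan is to treat the two displayed identities separately: I would establish the general formula \eqref{eqHmnInt} by a generating-function computation, and then recover \eqref{comp-real-herm} either as a special case or directly from the Wigner expression preceding the theorem, which I take as given.

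First I would prove \eqref{eqHmnInt}. Writing $s,t$ for the generating-function variables (to avoid a clash with the integration variable $u$), I multiply the right-hand side by $s^m t^n/(m!\,n!)$ and sum over $m,n\ge 0$. Interchanging sum and integral, the two Hermite sums are resummed by $\sum_{k\ge 0}w^k H_k(\xi)/k! = e^{2w\xi-w^2}$, applied with $w=s/\sqrt2,\ \xi=u+b$ in one factor and $w=-t/\sqrt2,\ \xi=u+c$ in the other. The double sum then becomes the single Gaussian integral
\[
\frac1{\sqrt\pi}\int_\R \exp\!\Big(-(u+a)^2 + \sqrt2\,s(u+b) - \tfrac{s^2}2 - \sqrt2\,t(u+c) - \tfrac{t^2}2\Big)\,du .
\]
Completing the square in $u$ and using $\tfrac1{\sqrt\pi}\int_\R e^{-u^2+\beta u}\,du = e^{\beta^2/4}$ collapses the exponent. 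The decisive point is that every term containing $a$ cancels, leaving $\sqrt2\,s(b-a)+\sqrt2\,t(a-c)-st$; substituting $b-a=z/\sqrt2$ and $a-c=\bar z/\sqrt2$ produces exactly $sz+t\bar z-st$. By \eqref{eqH-GF} this is the generating function of $H_{m,n}(z,\bar z)$, so comparing coefficients of $s^m t^n$ yields \eqref{eqHmnInt} for every admissible $a$; this cancellation is precisely why the representation is insensitive to $a$.

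For \eqref{comp-real-herm} I would start from the Wigner form $H_{m,n}=\frac{(-1)^n e^{|z|^2/2}}{\sqrt{2^{m+n}}}\,\mathrm{Tr}[D(z,\bar z)^*\,|H_n\rangle\langle H_m|]$. Since $|H_n\rangle\langle H_m|$ has rank one, its trace against $D(z,\bar z)^*$ reduces to the scalar product $\langle H_m\mid D(z,\bar z)^* H_n\rangle$ in $\h=L^2(\R,e^{-u^2}du/\sqrt\pi)$. Inserting the explicit action $[D(z,\bar z)^* f](u)=e^{-i(x+\sqrt2\,u)y}f(u+\sqrt2\,x)$ and using $(z+\bar z)/\sqrt2=\sqrt2\,x$ identifies the argument of $H_n$, while the weight $e^{|z|^2/2}e^{-i(x+\sqrt2 u)y}e^{-u^2}$ is turned into the real Gaussian $e^{-(u+\bar z/\sqrt2)^2}$ by completing the square in $u$ and returning the contour to the real axis. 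Equivalently, \eqref{comp-real-herm} is the instance of \eqref{eqHmnInt} in which $a$ is chosen to leave $H_m$ unshifted; matching the parameters $a,b,c$ (together with an elementary change of variable) is routine.

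The main obstacle is bookkeeping rather than conceptual difficulty: tracking the $2^{(m+n)/2}$ and $(-1)^n$ normalizations, and justifying the step that converts the phase $e^{-i(x+\sqrt2 u)y}$ into a real Gaussian shift — a Cauchy-theorem argument resting on the entirety of the integrand and its Gaussian decay. That same decay estimate licenses the interchange of summation and integration in the generating-function step. Once these are in place, both identities follow from the one Gaussian integral and the comparison against \eqref{eqH-GF}.
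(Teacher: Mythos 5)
Your generating-function argument for \eqref{eqHmnInt} is correct and is essentially the paper's own proof: the paper sums the unsigned integrals $I(m,n)\,s^mt^n/(m!\,n!)$, arrives at $\exp\bigl(\sqrt{2}\,s(b-a)+\sqrt{2}\,t(c-a)+st\bigr)$, and matches this against \eqref{eqH-GF} with $v=-t$, whereas you absorb the $(-1)^n$ from the outset by taking $w=-t/\sqrt{2}$ in the second Hermite sum; the computation, the cancellation of $a$, and the comparison of coefficients are identical. Your remark that Gaussian decay justifies interchanging sum and integral is a detail the paper leaves implicit.

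The gap is your closing claim that \eqref{comp-real-herm} is ``the instance of \eqref{eqHmnInt} in which $a$ is chosen to leave $H_m$ unshifted,'' with the parameter matching dismissed as routine. Carry out the matching: $b=0$ forces $a=-z/\sqrt{2}$ and $c=a-\bar z/\sqrt{2}=-(z+\bar z)/\sqrt{2}$, so \eqref{eqHmnInt} yields
\[
H_{m,n}(z,\bar z)=\frac{(-1)^n}{2^{(m+n)/2}}\int_\R e^{-(u-z/\sqrt{2})^2}H_m(u)\,H_n\bigl(u-(z+\bar z)/\sqrt{2}\bigr)\,\frac{du}{\sqrt{\pi}},
\]
and after the substitution $u\mapsto-u$ (using $H_k(-x)=(-1)^kH_k(x)$) this becomes
\[
H_{m,n}(z,\bar z)=\frac{(-1)^m}{2^{(m+n)/2}}\int_\R e^{-(u+z/\sqrt{2})^2}H_m(u)\,H_n\bigl(u+(z+\bar z)/\sqrt{2}\bigr)\,\frac{du}{\sqrt{\pi}},
\]
which is \emph{not} \eqref{comp-real-herm}: the Gaussian is centered at $-z/\sqrt{2}$ rather than $-\bar z/\sqrt{2}$, and the sign is $(-1)^m$, not $(-1)^n$. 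Indeed \eqref{comp-real-herm} as printed is inconsistent with \eqref{eqHmnInt}: for $m=1$, $n=0$ its right-hand side evaluates (shift $u\mapsto u-\bar z/\sqrt{2}$) to $-\bar z$, while $H_{1,0}(z,\bar z)=z$. So the ``routine'' identification cannot be completed; what your check actually uncovers is a sign/conjugation typo in the paper's statement of \eqref{comp-real-herm} (the same one that makes the paper's prefactor $(-1)^n$ disagree with its own formula \eqref{wig-trans-ccomp-her}, which gives $(-1)^m$). Consequently \eqref{comp-real-herm} can only be obtained from the Wigner-trace computation---the route the paper takes in the text preceding the theorem, and which you sketch but do not execute: the contour-shift step turning $e^{|z|^2/2}e^{-i(x+\sqrt{2}u)y}e^{-u^2}$ into a shifted real Gaussian is precisely where the deferred bookkeeping (and the sign discrepancy) lives. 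Your proof of the general identity \eqref{eqHmnInt} stands; the reduction of the first identity to it does not.
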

\begin{proof} We give another proof using generating functions.   Consider the integral
\bea
I(m,n) = \frac{1}{2^{(m+n)/2}}\int_\R e^{-(u+a)^2} H_m(u+b) H_n(u+c) \frac{du}{\sqrt{\pi}}.
\eea
Clearly
\bea
\bg
\sum_{m,n=0}^\infty I(m,n) \frac{s^mt^n}{m!\; n!}
= \frac{1}{\sqrt{\pi}}\int_\R \exp\left(-(u+a)^2 + \sqrt{2}s(u+b) + \sqrt{2}t(u+c) - (s^2+t^2)/2\right) du \\
=  \frac{1}{\sqrt{\pi}}\int_\R \exp\left(-(u+a -(s+t)/\sqrt{2})^2 + \sqrt{2}sb + \sqrt{2}tc  +st
- \sqrt{2} a(s+t)  \right) \\
= \exp\left( \sqrt{2}s(b-a) + \sqrt{2}t(c-a)  +st \right).
\eg
\notag
\eea
We now make the choices indicated in the theorem and prove \eqref{eqHmnInt}.
\end{proof}

From (\ref{al-form}) it is clear that (\ref{eqHmnInt}) can be extended to deformed polynomials in the manner
\bea
H^{(g)}_{k,L-k} (z , \overline{z}) =  \frac{(-1)^n}{2^{(m+n)/2}}\sum_{r=0}^L M(g, L)_{rk}\int_\R e^{-(u+a)^2} H_r (u+b) H_{L-r}(u+c) \frac{du}{\sqrt{\pi}},
\label{eqHmnInt-def}
\eea

The integral representation \eqref{eqHmnInt} suggests a possible extension to two variables.
Indeed, we have the following result.
\begin{thm}
For $a \in \C$, we have the integral representation
\bea
\label{eqeDHIR}
\bg
  \int_{\R^2} H_{m,n}(z+a, \bar z + \bar a)H_{p,q}(z-a, \bar z- \bar a) e^{-z \bar z} dx dy
 \\
 = (-1)^{p+q} \pi  H_{m,q}(a,\bar a)
 H_{n,p}(a,\bar a).
\eg
\eea
\end{thm}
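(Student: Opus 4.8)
The natural strategy, consistent with the generating-function proofs used for the earlier theorems, is to verify the identity at the level of exponential generating functions. The plan is to multiply both sides of the claimed formula by $u^m v^n s^p t^q/(m!\,n!\,p!\,q!)$, sum over all $m,n,p,q\ge 0$, and show that the two resulting generating functions coincide. Since a complex Hermite polynomial is determined by its generating function, this yields the identity coefficient by coefficient.

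First I would treat the left-hand side. Summing the two complex-Hermite families separately and interchanging sum and integral (justified by the rapid convergence of the series and the Gaussian weight), the generating function \eqref{eqH-GF} turns the double sum over $(m,n)$ into $\exp(u(z+a)+v(\bar z+\bar a)-uv)$ and the double sum over $(p,q)$ into $\exp(s(z-a)+t(\bar z-\bar a)-st)$. After collecting the coefficients of $z$ and $\bar z$, the left-hand side becomes a constant exponential factor, namely $\exp(ua+v\bar a-sa-t\bar a-uv-st)$, times the planar Gaussian integral $\int_{\R^2}\exp((u+s)z+(v+t)\bar z)\,e^{-z\bar z}\,dx\,dy$.

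The key computational step is evaluating this Gaussian integral. Writing $z=x+iy$ and separating the $x$- and $y$-integrals gives $\int_{\R^2}\exp(Az+B\bar z)\,e^{-z\bar z}\,dx\,dy=\pi\,e^{AB}$, which is exactly the evaluation already invoked in the orthogonality proof above. Applying it with $A=u+s$ and $B=v+t$ and multiplying by the constant factor, the quadratic term $(u+s)(v+t)=uv+ut+sv+st$ cancels the $-uv$ and $-st$ contributions and leaves the cross terms $ut+sv$, so the left-hand generating function reduces to $\pi\exp\big((u-s)a+(v-t)\bar a+ut+sv\big)$.

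Finally I would expand the right-hand side. Absorbing the sign $(-1)^{p+q}$ by substituting $s\mapsto -s$ and $t\mapsto -t$, the double sum factors into a product of two instances of the undeformed generating function \eqref{eqH-GF} evaluated at $(a,\bar a)$; here the $-uv$-type coupling term carried by each factor reproduces precisely the cross terms $ut$ and $sv$, and the linear terms reproduce $(u-s)a+(v-t)\bar a$. A short computation then shows this product equals the same exponential $\pi\exp\big((u-s)a+(v-t)\bar a+ut+sv\big)$ obtained for the left-hand side, completing the proof. I expect the only real obstacle to be the exponent bookkeeping in this last step: one must pair the indices correctly and check that the quadratic $-\mu\nu$ coupling terms carried by the two complex-Hermite generating functions match exactly the cross-terms produced by the Gaussian integral, including the sign conventions introduced by the $(-1)^{p+q}$ factor.
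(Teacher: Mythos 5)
Your overall route is the same as the paper's: turn both sides into exponential generating functions in four variables, evaluate the Gaussian integral via $\int_{\R^2}e^{Az+B\bar z}e^{-z\bar z}\,dx\,dy=\pi e^{AB}$, and compare exponentials. Your left-hand side computation is correct and agrees with the paper: the generating function of the integrals is
\[
\pi\exp\bigl((u-s)a+(v-t)\bar a+ut+sv\bigr).
\]
The genuine gap is in the last step --- exactly where you located the ``only real obstacle'' and then waved it through as a short computation. After the substitution $s\mapsto-s$, $t\mapsto-t$, the factor $\sum_{m,q}H_{m,q}(a,\bar a)\,u^m(-t)^q/(m!\,q!)$ does give $\exp(ua-t\bar a+ut)$, but the other factor $\sum_{n,p}H_{n,p}(a,\bar a)\,v^n(-s)^p/(n!\,p!)$ gives $\exp(va-s\bar a+sv)$, because in \eqref{eqH-GF} the first argument of $H_{n,p}$ pairs with $v$ and the second with $-s$. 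So the stated right-hand side generates
\[
\pi\exp\bigl(ua+va-s\bar a-t\bar a+ut+sv\bigr),
\]
which matches the left side only when $a=\bar a$. The identity as printed is in fact false for non-real $a$: with $(m,n,p,q)=(0,1,0,0)$ the left side is $\int_{\R^2}(\bar z+\bar a)e^{-z\bar z}\,dx\,dy=\pi\bar a$, while the stated right side is $\pi H_{0,0}(a,\bar a)H_{1,0}(a,\bar a)=\pi a$.

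What your computation actually proves, once the bookkeeping is done honestly, is the corrected identity
\[
\int_{\R^2}H_{m,n}(z+a,\bar z+\bar a)\,H_{p,q}(z-a,\bar z-\bar a)\,e^{-z\bar z}\,dx\,dy
=(-1)^{p+q}\,\pi\,H_{m,q}(a,\bar a)\,H_{p,n}(a,\bar a),
\]
since $\exp(v\bar a-sa+sv)$ is the generating function of $H_{p,n}(a,\bar a)=H_{n,p}(\bar a,a)$ in the variables $(-s,v)$. You are in good company: the paper's own proof makes the identical slip in its final display, reading $\exp(\bar a v-a\xi+v\xi)$ as the generating function of $H_{n,p}(a,\bar a)$ rather than of $H_{n,p}(\bar a,a)$, so the transposed index is a misprint in the theorem itself. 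But a proof that asserts the two exponentials coincide as stated is asserting something false; the matching step cannot be skipped, and carrying it out forces the correction of the statement.
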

\begin{proof}
Let $f(m,n,p,q)$ denote the right-hand side of \eqref{eqeDHIR}. The generating function
\eqref{eqH-GF} implies
\bea
\bg
\notag
\sum_{m,n,p,q=0}^\infty f(m,n,p,q) \; \frac{u^m\,v^n\, \xi^p\, \eta^q}{m!\, n!\,p!\, q!} \\
= e^{au+ \bar a v +  a \xi+ \bar a \eta -uv- \xi\eta}  \int_{\R^2}
\exp\left((x(u+v+ \xi+\eta) +iy(u-v+\xi-\eta\right)
e^{-x^2-y^2}\, dxdy \\
=\pi \exp\left((a (u -\xi) + \bar a (v-\eta)  -uv-\xi\eta   + (u+\xi)(v+\eta)\right) \\
= \pi \exp\left((a (u-  \xi) + \bar a (v - \eta) + v \xi    + u \eta \right) \\
= \pi \exp\left((a u + \bar a ( - \eta) +  a(-\xi) + \bar a  v + v \xi    + u \eta  \right).
\eg
\eea
In view of the generating function \eqref{eqH-GF} the above expression is
\bea
\notag
\pi \sum_{m,q=0}^\infty H_{m,q}(a,\bar a) \frac{u^m (-\eta)^q}{m!q!}
\sum_{n,p=0}^\infty H_{n,q}(a,\bar a) \frac{v^n (-\xi)^p}{n!p!}
\eea
This establishes the theorem.
\end{proof}

It is clear from (\ref{eqH-Ito}) that $H_{m,n}(0,0) = (-1)^n n!\; \delta_{m,n}$. Thus, the case of $a=0$ in (\ref{eqeDHIR}) gives the orthogonality relation. Furthermore, using (\ref{al-form}) we could extend (\ref{eqeDHIR}) to the case of deformed Hermite polynomials in a straightforward manner.

\section{A second representation}\label{sec-sec-rep}
A second representation of the $H_{m,n}$ in terms of the real Hermite polynomials can be obtained by
 noting that $L^2(\mathbb C , d\nu (z, \overline{z})\simeq L^2(\mathbb R, \dfrac{e^{-x^2}\;dx}{\sqrt{\pi}})\otimes L^2(\mathbb R, \dfrac{e^{-y^2}\;dy}{\sqrt{\pi}})$. Then writing
 $$ z - \partial_{\overline{z}} = \frac 12 (2x - \partial_x ) + \frac i2 (2y - \partial_y ), \qquad
    \overline{z} - \partial_z = \frac 12 (2x - \partial_x ) - \frac i2 (2y - \partial_y ), $$
in (\ref{compherm-basis})  we get
$$ H_{m,n} (z, \overline{z}) = \frac 1{2^{m+n}}[(2x - \partial_x )  + i(2y - \partial_y )]^m\;
       [(2x - \partial_x )  - i(2y - \partial_y )]^n\; h_{0,0}. $$
Expanding the binomials and comparing with (\ref{real-herm}),
\be
   H_{m,n} (z, \overline{z}) = \frac 1{2^{m+n}} \sum_{k=0}^m\sum_{\ell =0}^n (-1)^{n-\ell} (i)^{m+n-k-\ell} \begin{pmatrix} m\\ k \end{pmatrix}\;\begin{pmatrix} n\\ \ell\end{pmatrix}H_{k+ \ell}(x) H_{m+n-k-\ell} (y)\; .
\label{comp-real-herm3}
\en
This result has also been obtained in \cite{Ism:Zha}, using a different technique.

  It is useful to rewrite the above expression in a somewhat different manner. Fixing $m+n = L$ and writing $k+\ell = q$ the above equation can be written as (see (\ref{irrep-mat-elem}))
\bea
   H_{m,L-m} (z, \overline{z})& = & \frac 1{2^L} \sum_{r=0}^L \left[\sum_{k = \max\{0,r+m-L\}}^{\min\{r,m\}} (-1)^{L-m-r+k} (i)^{L-r} \begin{pmatrix} m\\ k \end{pmatrix}\;\begin{pmatrix} L-m\\ r-k\end{pmatrix}\right]\nonumber\\
    & \quad & \times H_r(x) H_{L-r} (y)\; .
\label{comp-real-herm5}
\ena
Defining the $(L+1)\times (L+1)$ matrix $M(L)$, with elements
$$
   M(L)_{r m} = \frac 1{2^L}\sum_{k = \max\{0,r+m-L\}}^{\min\{r,m\}} (-1)^{L-m-r+k} (i)^{L-r} \begin{pmatrix} m\\ k \end{pmatrix}\;\begin{pmatrix} L-m\\ r-k\end{pmatrix}, $$
we can also write (\ref{comp-real-herm5}) as
\be
   H_{m,L-m} (z, \overline{z}) = \sum_{r=0}^L M(L)_{r m}H_r(x) H_{L-r} (y)\; .
\label{comp-real-herm7}
\en
Once again, using (\ref{al-form}) we could extend the above relation  to the case of deformed Hermite polynomials.

We can interpret Eq.(\ref{comp-real-herm7}) in the following manner. Both sets, $H_m (x)H_n(y), \; m,n = 0,1,2, \ldots \infty$, and $H_{m,n} (z. \overline{z}), \; m,n = 0,1,2, \ldots \infty$,
form bases in the Hilbert space
$$
L^2(\mathbb R, \dfrac{e^{-x^2 - y^2}\;dxdy}{\pi}) \simeq
L^2(\mathbb R, \dfrac{e^{-x^2}\;dx}{\sqrt{\pi}})\otimes L^2(\mathbb R, \dfrac{e^{-y^2}\;dy}{\sqrt{\pi}}) $$
For fixed $L$, the set of vectors $H_r\otimes H_{L-r}, \; r =0,1,2, \ldots L$, span an $(L+1)$-dimensional subspace, $\mathfrak H(L)$, of this Hilbert space. Two subspaces, $\mathfrak H(L)$ and $\mathfrak H(L^\prime)$ are orthogonal if $L \neq L'$. The matrix $M(L)$ simply effects a basis change on this subspace, to the new basis $H_{r, L-r}, \; r =0,1,2, \ldots L$. The integral transform
(\ref{comp-real-herm}) effects a basis change on the entire Hilbert space. Denoting this transform by the operator $M$,  so that $M(H_m\otimes H_n) = H_{m,n}, \; m,n = 0,1,2, \ldots , \infty$, we see that it decomposes into the orthogonal direct sum $M = \oplus_{L=0}^\infty M(L)$. General maps of the type $M(L)$ have been
studied in \cite{balshali} and shown to be useful in  the construction of {\em pseudo fermions} in \cite{albaga}.


\end{document}